\DeclarePairedDelimiter{\ceil}{\lceil}{\rceil}
\DeclarePairedDelimiter{\parens}{\lparen}{\rparen}
\def\maxsumsextended{\textsc{Maximum Consecutive Subsums Problem}}
\def\maxsums{\textsc{MCSP}}
\def\E{\operatorname{E}}
\def\Prob{\operatorname{Pr}}
\begin{document}

%
%
%
%
%



\newtheorem{conjecture}{Conjecture}
\newtheorem{theorem}{Theorem}
\newtheorem{proposition}{Proposition}

\title{Searching for a superlinear  lower bounds for the Maximum Consecutive Subsums Problem and the  (min,+)-convolution}

\author{Wilfredo Bardales Roncalla\\  PUC-RIO, Brazil\\
{\tt laber@inf.puc-rio.br} \and Eduardo Laber\\ PUC-RIO, Brazil\\
{\tt laber@inf.puc-rio.br} \and  Ferdinando Cicalese \\Department of Computer Science, University of Verona, Italy \\
{\tt ferdinando.cicalese@univr.it}}



\maketitle

\begin{abstract}
Given a sequence of $n$ numbers, the \maxsumsextended (\maxsums) asks for the maximum consecutive sum of lengths $\ell$ for each $\ell = 1, \dots, n$. No algorithm is known for this problem which is significantly better than the naive quadratic solution. 
Nor a super linear lower bound is known. 
The best known bound for the \maxsums\ is based on the the computation of the $(min,+)$-convolution, another problem for which 
neither an $O(n^{2-\epsilon})$ upper bound is known nor a super linear lower bound. 
We show that the two problems are in fact computationally equivalent by providing linear
reductions between them. Then, we concentrate on the problem of finding super linear lower bounds and provide empirical evidence for an $\Omega(n \log n)$ lower bounds for both problems in the decision tree model. 
\end{abstract}

\section{Introduction}

%
%

Given a sequence $A = ( a_1, a_2, \ldots, a_n)$ of $n$  numbers, the \maxsums\ asks to compute
 the sequence
$m_1, \dots, m_n,$ where  
$m_{\ell} = \max_{i = 1, \dots n-\ell+1} \{ a_{i} + \dots + a_{i+\ell-1}\},$
is the  maximum over all consecutive subsums of length $\ell.$

The \maxsums\ appears in several scenarios of both theoretical and practical interest like 
approximate pattern matching~\cite{Burcsi:2012:OAJPMS}, mass spectrometry data analysis~\cite{Cieliebak},  and in the problem of 
locating large empty regions in data sets~\cite{Bergkvist:2005:FAFDSED}. Most work has been done for the case where the input sequence is binary, since in this case the 
 \maxsums\ coincides with the problem of constructing
membership query indexes  for  jumbled pattern matching 
\cite{Burcsi:2012:OAJPMS,Golnaz:2012:BJSM,Travis:ESA13}.


It is not difficult to come up with simple $O(n^2)$ solutions for the \maxsums\, since each value $m_{\ell}$ 
can be easily computed in 
linear time by  one pass over the input sequence. Surprisingly, despite the growing interest generated by this problem 
(see, e.g., ~\cite{Burcsi:2012:OAJPMS,Cicalese:2009:CFL,Cicalese:2012:AMCSP,Moosa:2012:STL}, and references 
therein quoted), no solution is known with running time  $O(n^{2-\epsilon})$ for some constant $\epsilon>0$, 
nor  a lower bound better than the trivial $\Omega(n)$ is known.


Algorithms that produce approximate solutions for the \maxsums\ are also known \cite{Cicalese:2012:AMCSP}.
However, for the general case, the best  available algorithm in the real RAM model runs
in $O(n^2/\log n)$  \cite{Burcsi:2012:OAJPMS,Moosa:2012:STL}
and makes use of the algorithm  proposed in \cite{conf/esa/BremnerCDEHILT06} for computing a $(\min, +)$-convolution.

The $(min,+)$-convolution problem is a natural variation of the classical convolution problem:
 Given two sequences $X=(x_0,x_1,\ldots,x_n)$ and $Y=(y_0,y_1,\ldots,y_n)$ of real numbers,
the  $(min,+)$-convolution of $X$ and $Y$ is the sequence $z = z_0, z_1 \dots z_{2n},$ with 
$ z_k= \min_{\atop i=0,\ldots,k} \{ x_i+y_{k-i} \}$, for $ k=0,\ldots,2n.$

This problem   has important applications in a variety of areas, including signal processing, 
pattern recognition, computer vision, and mathematical programming. 
According to  \cite{conf/esa/BremnerCDEHILT06}, this problem has appeared frequently 
 in the literature since Bellman's early work on dynamic programming. 

Like for the  \maxsums, no strongly subquadratic algorithm appears to be known to 
compute the $(min, +)$-convolution.
The  best known algorithm for computing $(min, +)$-convolution runs in $O(n^2/\log n)$.
Recently, Williams \cite{Williams} has provided a Monte Carlo algorithm that  computes the  
 $(min, +)$ convolution  in $O(n^2 / 2^{\Omega(\log^{1/2} n)})$ time on the real RAM. Although, this bound is 
better than $O(n^2/polylog(n))$ it is still $\omega(n^{2-\epsilon})$ for any $\epsilon > 0.$
For the special case of monotone increasing sequences with elements bounded by $O(n)$ a recent breakthrough 
in \cite{ChanSTOC2015} shows that $(min, +)$-convolution can be computed in time $\tilde{O}(n^{1.864}).$ 
This result implies an equivalent sub quadratic bound for \maxsums\ on 0/1 sequences. 

Taking into account the apparent difficulty to devise an $O(n^{2-\epsilon})$ 
algorithm for the \maxsums\ and for computing the  $(min,+)$-convolution,
 a natural question  to ask is whether there exists a non-trivial superlinar lower bound 
 for these problems. 

The decision tree model is a widely used model of computation to study lower bounds for algorithmic problems.
Fundamental algorithmic problems as searching, sorting and selection are examples of problems that have been studied in this model.
In the decision tree model each algorithm can be
represented by a decision tree, where internal nodes correspond to computations; the branches(edges) that leave an internal node $v$  correspond to the  possible results of the computation associated with $v$ and the leaves correspond to the possible outputs of the algorithm (see section \ref{sec:LowerBound} for the exact definitions). Some variants/specializations 
of the decision tree model as linear decision trees and algebraic decision trees have also been used to study the complexity
of algorithmic problems \cite{BenOr83a}. In \cite{conf/esa/BremnerCDEHILT06} it is shown that the $(min, +)$-convolution of two $n$-element vectors 
can be computed in $O(n^{3/2})$ in the non-uniform linear decision tree model. 

This work describes our findings in the quest for a super-linear lower bound 
for the \maxsums\ in the decision tree model  of computation. 

\medskip

\noindent
{\bf Our Contributions.}
We provide computational  evidences that the running time of both  \maxsums\ and  (min,+)-convolution problem is
$\Omega(n \log n)$ in the decision tree model of computation. 

We first establish a linear equivalence between both problems  by showing a  linear reduction from 
 (min,+)-convolution to \maxsums\. The opposite direction 
was  shown in \cite{Bergkvist:2005:FAFDSED}.
As a result of this equivalence, any bound for one problem also holds for the other. Then, in the following, we only concentrate on the 
\maxsums\ . 
%

In Section \ref{sec:LowerBound},  we argue that a lower bound for the \maxsums\  in the decision tree model 
can be obtained by generating a large set of inputs sequences  such that no pair of them produces
a common output. We present  a construction which shows that there exists such a set of exponential size, although, this is still 
not enough to  prove a superlinear lower bound on \maxsums\..

In Section \ref{sec:Experimental}, by using a deterministic approach we show empirically that for 
$n \leq 14$ there exists a set of inputs, with the above property, 
and cardinality  larger than $(n/2)!$ so that  $n/2 \log (n/2)$ is a lower bound on the depth of any decision tree that solves the \maxsums\  for instances of size $n \leq 14$. This required 
27 hours of CPU time in our computational environment. In order to address larger values of $n$, we employed  sampling strategies. We devised a hypothesis test and showed  that the $n/2 \log (n/2)$ lower bound also holds for any $n \leq 100$ with confidence 
much larger than $99.999\%$.

Our results can bring new insight on the complexity of both the \maxsums\ and the (min,+)-convolution problem and
may
represent a  initial step towards proving a superlinear lower bound for these problems. Moreover, the techniques employed 
might be useful in the  investigation of  lower bounds for other computational problems with the
same flavour.

\section{An Equivalence between \maxsums\ and the $(min,+)$-Convolution Problem}
\label{sec:convolution}

We start by showing that  \maxsums\ and $(min,+)$-convolution are computationally equivalent.

\begin{theorem}
There exist linear reductions between  the \maxsums\ and  the  $(min,+)$-convolution. 
\end{theorem}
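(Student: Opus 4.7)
The plan is to exhibit explicit linear-time reductions in both directions. The reduction from \maxsums\ to $(min,+)$-convolution, attributed to Bergkvist and Persson~\cite{Bergkvist:2005:FAFDSED}, follows a standard prefix-sum argument: letting $S_0, \ldots, S_n$ be the prefix sums of the input $A$, one has $m_\ell = \max_j (S_{j+\ell} - S_j)$, which matches an entry of the $(\max,+)$-convolution of the sequence $(S_0, \ldots, S_n)$ with its reversed-and-negated sibling $(-S_n, -S_{n-1}, \ldots, -S_0)$. Converting $(\max,+)$ to $(\min,+)$ by negating the inputs and then the output finishes this direction.

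For the reverse direction, I would construct a sequence $A$ whose prefix sums $T_0, \ldots, T_{2n+1}$ encode both inputs $X = (x_0, \ldots, x_n)$ and $Y = (y_0, \ldots, y_n)$ in two contiguous blocks, with $Y$ placed reversed and offset by a large constant. Concretely, set $T_i = x_i$ for $i \in \{0, \ldots, n\}$, and $T_{n+1+j} = M - y_{n-j}$ for $j \in \{0, \ldots, n\}$, where $M$ is chosen larger than $3(\max_i |x_i| + \max_j |y_j|)$, and define $A$ by $A_j = T_{j+1} - T_j$ for $j = 0, \ldots, 2n$. Any window sum of length $\ell$ in $A$ equals $T_b - T_a$ with $b - a = \ell$. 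A direct check shows that if the pair lies entirely inside the first $n+1$ entries of $T$ or entirely inside the last $n+1$ entries, then $|T_b - T_a|$ is bounded by $2\max_i |x_i|$ or $2\max_j |y_j|$ respectively; otherwise it is a \emph{cross} pair with $a \in [0, n]$, $b \in [n+1, 2n+1]$, and $T_b - T_a = M - (x_a + y_{k-a})$ where $k := 2n+1-\ell$. The choice of $M$ guarantees that cross pairs always dominate.

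Matching the range $a \in [\max(0, n+1-\ell), \min(n, 2n+1-\ell)]$ of cross pairs for a given $\ell$ with the valid summation range of $z_k = \min_i (x_i + y_{k-i})$, namely $i \in [\max(0, k-n), \min(n, k)]$, yields $m_\ell = M - z_{2n+1-\ell}$ for every $\ell \in \{1, \ldots, 2n+1\}$. Thus the output of \maxsums\ on $A$ determines all $2n+1$ entries of the convolution via $O(1)$ arithmetic per entry, and both the construction of $A$ and the recovery of $z$ are linear in $n$. The main technical point requiring care is the choice of the offset $M$ and verifying that for each window length $\ell \in [1, 2n+1]$ at least one cross pair exists; this reduces to checking that the interval for $a$ above is non-empty for every such $\ell$, a small case split on whether $\ell \le n+1$ or $\ell > n+1$.
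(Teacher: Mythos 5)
Your proposal is correct and follows essentially the same route as the paper: the \maxsums\ -to-$(\min,+)$ direction is delegated to \cite{Bergkvist:2005:FAFDSED}, and the converse is an explicit linear construction that encodes $X$ and $Y$ as blocks of consecutive differences separated by a large sentinel, so that every optimal window must straddle the sentinel and its sum telescopes to $\mathrm{const} - (x_i + y_{k-i})$. The differences are only cosmetic --- you use a single large offset $M$ in the prefix sums (a sequence of length $2n+1$ with $m_\ell = M - z_{2n+1-\ell}$) where the paper inserts two sentinels $S$ (a sequence of length $2n+4$ with $m_{k+4} = 2S - z_k$) --- apart from the minor slip that the citation is to Bergkvist and Damaschke, not Bergkvist and Persson.
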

\begin{proof}
The reduction from \maxsums\, to $(min, +)$-convolution is observed in \cite{Bergkvist:2005:FAFDSED} where the latter problem is
presented under the geometric naming Lowest Midpoints Problem.
Thus, we just present the reduction from the $(min, +)$-convolution to \maxsums.

Let
$I=(X,Y)$ be an input for the $(min,+)$-convolution, where $X=(x_0,\ldots,x_n)$ and $Y=(y_0,\ldots,y_n)$.
Let $S$ be a  large enough number and define the input sequence  $A=(a_1,\ldots,a_{2n+4})$ 
for the \maxsums\ as follows: $a_{n+1} = a_{n+4} = S$; $a_{n+3} = -y_0$; $a_{n+2} = -x_0$; 
for each $i < n+1$ set $a_i = x_{n-i} - x_{n+1-i}$ and for $i > n+4$ set $a_i = y_{i-n-5} -y_{i-n-4}:$

\begin{table}[h]
\small
\centering
\begin{tabular}{cccccccccccc}
$a_1$ & $\cdots$ & $a_n$ & $a_{n+1}$ & $a_{n+2}$ & $a_{n+3}$ & $a_{n+4}$ & $a_{n+5}$ & $a_{n+6}$ & $\cdots$ & $a_{2n+4}$\\
$(x_{n-1} - x_n)$  & $\cdots$ & $(x_0 - x_1)$ &
$S$ & $-x_0$ & $-y_0$ & $S$ &
$(y_{0} - y_1)$ & $(y_{1} - y_{2})$ & $\cdots$ & $(y_{n-1} - y_n)$\\
\end{tabular}
\end{table}

Assuming $S > \sum_{i=0}^n |x_i| + |y_i||,$ for any $k \geq 4$ 
a maximum sum of $k$ consecutive elements includes $a_{n+1}$ and $a_{n+4}.$ Then,  for each $k \geq 4,$ 
$$\max_{j=1, \dots, (2n+4) -k+1} \sum_{i=j}^{j+k-1} a_i = 
\max_{\substack{0 \leq s,t \leq k-4 \\  s+t = k-4}} \sum_{i=n+1-t}^{n+4+s} a_i =
\max_{\substack{0 \leq s',t' \leq k-4 \\ s'+t' = k-4}}  2S -x_{t'} - y_{s'} =$$
$$2S - \min_{\substack{0 \leq t' \leq k-4}}  (x_{t'} + y_{k-4-t'})  = 2S - z_{k-4}.$$ 
where, as above, $z_0, \dots, z_{2n}$  denotes the result of the convolution of $X$ and $Y$. 
Therefore, for each $k=0, \dots, 2n,$ we have the equivalence
$z_{k}= 2S - \left ( \sum_{j=p_{k+4}}^{p_{k+4}+k+3} a_i \right ),$
where $p_k$ is the starting position of a maximum consecutive sum of length $k$ in  $A$. 
 \end{proof}
 
Due to these linear time reductions we can conclude that \maxsums\, and $(min, +)$-convolution have the same time complexity.
In the following, we will focus our  discussion on lower bounds for \maxsums\, and any conclusion reached will also 
hold for the $(min, +)$-convolution.






\section{Towards a Lower Bound for the MCSP}
\label{sec:LowerBound}
In this section we discuss our approach  to prove a lower bound for the \maxsums.
It will be convenient to employ the following alternative
formulation of the \maxsums\:

\medskip

{\em Input.} A sequence $A=(a_1,\ldots,a_n)$ of $n$ real numbers;

\medskip

{\em Output.} A sequence $P=(p_1,\ldots,p_n)$,
where $p_{\ell}$, for $\ell =1,\ldots,n$, is the starting position of a consecutive subsequence of $A$  that
has maximum sum among the consecutive subsequences of $A$ with length $\ell$, i.e., 
the subsequence $a_{p_\ell}\,  a_{p_\ell + 1} \ldots a_{p_i + \ell - 1}$ is a maximum consecutive subsum of length $\ell$.

We call the sequence  $P$ an \emph{output configuration} or simply a
\emph{configuration}. 

For example, for the input sequence $A = (3, 0, 5, 0, 2, 4)$ the only output configuration is given by the sequence
$P = (5, 5, 1, 3, 2, 1),$ which says, e.g., that there is a maximum consecutive subsum of length $4$ starting at position  $3.$

%

We note that there are $n!$ possible configurations because
 $p_i$, for $i=1,\ldots,n$, can assume any value in the set $\{1,\ldots,n-i+1\}$.
In particular for the input $A=(a_1,\ldots,a_n)$, with $a_1=a_2=\cdots=a_n=1$, all
the $n!$ possible configurations are output configurations for $A$.
This example also shows that some input sequences  have more than one output configuration.

We will study the above version of \maxsums\, in the {\em decision tree model}. An algorithm in this
model is a ternary tree. Each internal node $I$  contains a test $f(I) \lesseqgtr 0 ?$ for some 
rational function $f$ of $n$ (size of the input) arguments\footnote{If all functions
are restricted to be linear (resp.\ polynomials) the model is referred to as the 
{\em linear} (resp. algebraic) decision tree model.}. Each leaf of the tree 
contains a set of function $g_i$ ($i = 1, \dots, n$) on the $n$-valued input. For any input 
$A = (a_1, \dots, a_n)$, the algorithm moves from the root down the tree. At each 
node the corresponding test is performed and a branch is followed according to 
whether the test outcome is $>0,\, < 0, \, = 0.$ When a leaf is reached, the output
$P = (p_1, \dots, p_n)$ is given by $p_i = g_i(A).$ The cost of the algorithm is defined 
to be the height of the tree. The complexity $K(n)$ in this model is the minimum 
cost of any such algorithm. We will be concerned with an information theoretic lower bound on 
$K(n).$

\subsection{An approach based on unique configurations}

For any input $A$ for the \maxsums\ we define 
 $\mathcal{P}(A) = \{ P\ |\ P \mbox{ is an output configuration for } A \}$.
 
We say that a configuration $P$  is {\em unique}
if and only if there exists an input $A$ for the \maxsums\, for which ${\cal P}(A)=\{P\}$.

Our approach to prove a lower bound on \maxsums\ consists on
finding a large set of distinct unique configurations.

In fact, let $\mathbb{P} = \{P_1,P_2,\ldots,P_k\}$ be a set of distinct unique configurations.
Moreover, let $A_i$, for $i=1,\ldots,k$, be an instance for \maxsums\ such that $P_i$ is its
unique configuration, i.e., ${\cal P}(A_i) = \{P_i\}.$ 
Then, the instances in the set $\mathbb{A}=\{A_1,\ldots,A_k \}$ are distinct.
Hence, in any decision tree, for any distinct inputs $A_i, A_j$, the leaves associated to the corresponding
outputs must be distinct, hence the decision tree must have at least $k$ leaves.  
Then, the height of the tree is at least $\ceil*{\log_3 k},$ which shows that 
$\ceil*{\log_3 k}$ is a lower bound to the problem in the decision tree model. 
We have proved the following.

\smallskip

\begin{theorem} \label{theo:lowerbound}
If $\mathbb{P}$ is a subset of distinct unique configurations, 
then $K(n) = \Omega(\log|\mathbb{P}|).$
\end{theorem}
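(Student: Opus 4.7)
The plan is to follow the information-theoretic counting argument already sketched informally right before the theorem statement: associate a distinct witness input to each unique configuration, show that distinct witnesses are routed to distinct leaves of any decision tree, and then invoke the elementary lower bound on the depth of a ternary tree with many leaves.

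More concretely, first I would fix, for every $P_i \in \mathbb{P}$, an instance $A_i$ with $\mathcal{P}(A_i) = \{P_i\}$; such an $A_i$ exists by the very definition of unique configuration. Next, let $T$ be any decision tree that correctly solves the \maxsums. For each $i$, feeding $A_i$ into $T$ produces a deterministic root-to-leaf path, ending at a leaf $L_i$; because $P_i$ is the \emph{only} valid output on $A_i$, the functions $g_1,\dots,g_n$ stored at $L_i$ must evaluate on $A_i$ to the components of $P_i$. I would then argue that the map $i \mapsto L_i$ is injective: if $L_i = L_j$ for some $i \ne j$, then the fixed output label attached to this common leaf forces $P_i = P_j$, a contradiction. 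Consequently $T$ has at least $|\mathbb{P}|$ leaves, and since $T$ is ternary (each internal test branches on $<0$, $=0$, $>0$), the height satisfies $\mathrm{height}(T) \ge \lceil \log_3 |\mathbb{P}| \rceil$. Taking the minimum over all correct $T$ gives $K(n) \ge \lceil \log_3 |\mathbb{P}| \rceil = \Omega(\log |\mathbb{P}|)$.

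The only delicate point, and the step I would expect to require the most care, is the claim that distinct unique outputs force distinct leaves. Under the standard convention in which every leaf of a decision tree is labelled with a single fixed output, this is immediate. In the more general formulation stated in the excerpt, where each leaf carries rational functions $g_i$ evaluated at the input, one has to rule out the possibility that $A_i$ and $A_j$ reach the same leaf yet the $g_i$'s produce $P_i$ on $A_i$ and the different $P_j$ on $A_j$. I would handle this by a standard refinement argument: within the cell of inputs reaching a fixed leaf $L$, the output must be a valid configuration, and the set of valid configurations is finite; a continuity argument on each connected component of the cell (the $g_i$'s are rational, hence continuous where defined, and the configuration space is discrete) then shows that each component of the cell admits a single configuration, so one may split $L$ into finitely many leaves, one per output, without increasing the tree's height. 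With this reduction in place, the injectivity argument becomes immediate and the bound $K(n)=\Omega(\log|\mathbb{P}|)$ follows as above.
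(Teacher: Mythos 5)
Your proposal is correct and follows essentially the same route as the paper: fix a witness input for each unique configuration, observe that these witnesses must reach distinct leaves because their (unique) outputs differ, and conclude that the ternary tree has at least $|\mathbb{P}|$ leaves and hence depth at least $\lceil \log_3 |\mathbb{P}| \rceil$. The only point where you go beyond the paper is the refinement argument for leaves labelled by rational functions --- a subtlety the paper silently glosses over; just note that literally splitting a leaf into one new leaf per connected component of its cell does not preserve the height-versus-leaf-count relation you need (the split adds depth, and the number of components per cell can be large), so in that more general model the standard repair is a Ben-Or-style count of connected components rather than leaf splitting.
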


We shall observe that if every configuration were unique, then 
we could prove a lower bound of $\Omega(n\log n)$ since there exists  $n!$ configurations of size $n$.
However, there are configurations that are not unique like
the configuration $P=(1,2,1)$. In fact,
assume that $A=(a_1,a_2,a_3)$ is an input for which $P$ is its unique configuration.
Then, we would have both $a_1>a_3$ and $a_2+a_3 > a_1+a_2$, which is not possible.

The following 
construction  shows 
that the number of unique configurations is indeed very large.

\begin{theorem}
\label{prop:exponentialfeasibles}
There exist  $\Omega\parens*{2^{n}}$ unique configurations.
\end{theorem}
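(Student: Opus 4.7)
The plan is to exhibit $\Omega(2^n)$ distinct unique configurations by constructing, for each binary string $b \in \{0,1\}^{n-2}$, an input $A(b)$ whose output configuration $P(b)$ is unique, and then showing that the map $b \mapsto P(b)$ is injective.

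For the construction I would use a ``base plus geometric perturbation'' scheme. Fix any large $M$ and scales $\epsilon_j = 10^{-j}$. Given $b = (b_2, b_3, \ldots, b_{n-1}) \in \{0,1\}^{n-2}$, set $s_1 = +1$ and $s_j = 2b_j - 1 \in \{-1,+1\}$ for $j \geq 2$, and define $a_1 = M$ and $a_i = M + s_{i-1}\epsilon_{i-1}$ for $i \geq 2$. The intent is that the heavy perturbation $\epsilon_1 = 0.1$ at position $2$ pins the maximum window of every length to either position $1$ or position $2$, while the remaining perturbations, whose total absolute value is $\sum_{j \geq 2}10^{-j} = 1/90 < \epsilon_1$, cannot overturn this dominance.

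For the uniqueness and encoding step, I would argue that for every $\ell \in \{2, \ldots, n-1\}$ the maximum window of length $\ell$ is attained uniquely at exactly one of positions $1$ or $2$: any window starting at position $i \geq 3$ has perturbation sum bounded in absolute value by $1/90$, whereas the windows at positions $1$ and $2$ each include the $+\epsilon_1$ term and hence have sum at least $\ell M + \epsilon_1 - 1/90$. Between the two candidates, the difference in sum is exactly $a_{\ell+1} - a_1 = s_\ell \epsilon_\ell \neq 0$, so $p_\ell = 2$ iff $s_\ell = +1$ iff $b_\ell = 1$. The extremes $\ell = 1$ (where $p_1 = 2$ because $a_2$ is the unique global maximum) and $\ell = n$ (where $p_n = 1$ is the only option) fit trivially, and we obtain $P(b) = (2, 1+b_2, \ldots, 1+b_{n-1}, 1)$. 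This configuration is unique and $b$ can be read off bit-by-bit from $P(b)$, so $b \mapsto P(b)$ is injective, yielding $2^{n-2} = \Omega(2^n)$ distinct unique configurations.

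The main technical step is the simultaneous verification that positions $1$ and $2$ strictly dominate all other starting positions, for every length $\ell$. This reduces to the single inequality $\epsilon_1 > \sum_{j \geq 2}\epsilon_j$, which is guaranteed by the geometric choice of scales and also conveniently decouples the competition per length, so that each $p_\ell$ depends only on the corresponding bit $b_\ell$. Fixing $s_1 = +1$ (rather than letting it vary with $b$) is a minor but necessary tweak to ensure that the dominant perturbation is really located at position $2$, so that $p_1 = 2$ and the general dominance argument applies uniformly across all lengths.
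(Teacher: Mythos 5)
Your proof is correct and follows essentially the same strategy as the paper's: force every maximum window to start at one of two fixed adjacent positions, and encode one independent bit per length in which of the two wins via a single-element comparison ($a_{\ell+1}$ vs.\ $a_1$ in your version, $a_{k+2}$ vs.\ $a_2$ in the paper's, which uses a huge middle element $a_3=4n$ as the anchor instead of your flat base with geometrically decaying perturbations). Both yield $\Theta(2^n)$ distinct unique configurations; your construction is sound, with the only nitpick being that the dominance condition you need is $\epsilon_1 > 2\sum_{j\ge 2}\epsilon_j$ rather than $\epsilon_1 > \sum_{j\ge 2}\epsilon_j$, which your choice $\epsilon_j=10^{-j}$ satisfies anyway.
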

\begin{proof} 
Fix a subset $S \subseteq \left \{ 4, 5 ,\ldots,n \right \}$
and define 
\setstretch{0.6}
$$
a^S_i = 
\begin{cases}
0 & \mbox{if } i =1 \\ \\
2 & \mbox{if } i = 2\\ \\
4n & \mbox{if } i = 3\\ \\
3 & \mbox{if }  i \geq 4 \mbox{ and } i \notin S \\ \\
1 & \mbox{if }  i \geq 4 \mbox{ and } i \in S \\
\end{cases}
\qquad 
p^S_j =
\begin{cases}
3 & \mbox{if } j \leq n-2 \mbox{ and } (j+ 2) \notin S \\ \\
2 & \mbox{if }j \leq n-2 \mbox{ and } (j+ 2) \in S \\ \\
n - j + 1& \mbox{if }  j \geq n-1
\end{cases}
$$
\setstretch{1.5}

We first show that for the input $A^S=(a_1^S,\ldots,a_n^S)$ the unique output configuration is
$P^S=(p^S_1,\ldots,p^S_n).$ For this we need to verify 
that for each $j=1, \dots, n,$ in $A$ there is only one maximum consecutive subsum of length $j$ and it starts at $p^S_j$ as given above.

%
%

The statement is trivially true for  $j=n$ and for $j=n-1$ since in the latter case it is enough to observe that
$0= a_1 < a_2.$ 

Since $a_3 = 4 n > \sum_{i \neq 3} a_i$ it follows easily that for any $1 \leq k \leq n-2$ 
a maximum consecutive sum of length $k$  must contain the element $a_3$. Any such sum 
will then start at one of the first 3 elements. However, since any element other than $a_1$ is greater than 
zero, it also follows that a maximum consecutive sum of size $k \in \{1, \dots, n-2\}$ must start at $a_2$ or $a_3.$
In formula, for $1 \leq k \leq n-2,$ let $s_k$ denote the maximum consecutive sum of size $k$, then
$s_k \in \{\sum_{i=2}^{k+1} a_i , \sum_{i=3}^{k+2} a_i\}.$ Let $\Delta_k$ be the difference between the two possible 
values for $s_k$. We have $\Delta_k = \sum_{i=2}^{k+1} a_i - \sum_{i=3}^{k+2} a_i = a_2 - a_{k+2} = 2 - a_{k+2}.$
Thus, $\Delta_k > 0$ if $k \in S$ and $\Delta_k < 0$ otherwise, that is, $s_k =  \sum_{i=2}^{k+1} a_i$ if $k \in S$ and 
$s_k =  \sum_{i=3}^{k+2} a_i$ if $k \not \in S,$ which proves the above claim. The uniqueness of the configuration 
follows from the fact that all the inequalities in the above arguments are strict.  

Therefore, for each $S \subseteq \{4,\dots,n\}$
we obtain an unique output configuration.
Since there are 
$2^n / 8$ distinct subsets of  $\{4,\dots,n\}$ and each of them corresponds to a distinct unique configuration we have the desired result.

\end{proof}

\smallskip

The existence of at least exponentially many (in $n$) configurations supports our approach and 
is an indication of its potential. 
However, this result combined with Theorem \ref{theo:lowerbound} is still not enough 
to  obtain a non trivial (superlinear) lower bound for the \maxsums.
This motivated us to enrich our analysis by  empirically exploring the number of unique configurations.

\section{Empirical evidences that \maxsums\ requires $\Omega(n\log n)$ time}
\label{sec:Experimental}

In order to  count the number of unique configurations
it is important  to decide whether a given configuration $P$ is unique or not.
For that  we test whether there 
exists an input sequence $A=(a_1,\ldots,a_n)$ for which $P$ is its unique output configuration.
For $i \in \{1,\ldots,n-1\}$ and a configuration $P=(p_1, \dots, p_n)$, let ${\cal Q}(P,i)$ be the  following set of inequalities:
$$
\sum_{k = p_i}^{p_i + i - 1} a_k >  \sum_{ k = j } ^{j + i - 1} a_k \,\,\,\,\,\,\,\,\,\,\,\,  \mbox{ for }  j=1,\ldots,n-i+1 \mbox{ and } j \neq p_i 
$$
It is easy to realize that the contiguous subsequence of length $i$  that starts at position $p_i$ of $A$ has sum larger
than the sum of any  other contiguous subsequence of length $i$ if and only if
the  point  $A=(a_1,\ldots,a_n) \in R^n $ satisfies the above set of inequalities.
Thus,  $P$ is a unique configuration if and only if the set of inequalities 
 ${\cal Q}(P,1) \cup {\cal Q}(P,2) \cup \cdots \cup {\cal Q}(P,n-1)$ has a feasible solution.
In our experiments we employed a linear programming solver to perform this verification.

In order to speed up our computation we also employ a sufficient condition for the non-uniqueness of a configuration, which is 
provided by the following proposition.

\smallskip

\begin{proposition}[Non-Adjacency Property]
\label{prop:adj1}
Let $P=(p_1,\ldots,p_n)$ be an output configuration. If there exist $i,j \in \{1,\ldots,n\}$
such that $p_j = p_i + i$, then $P$ is not unique.
\end{proposition}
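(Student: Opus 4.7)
My plan is to show that the adjacency hypothesis $p_j = p_i + i$ forces any input $A$ with $P \in \mathcal{P}(A)$ to admit a second output configuration $P' \neq P$, so $P$ cannot be the unique output of any input. Throughout, let $s_\ell$ denote the value of a maximum length-$\ell$ consecutive subsum of such an $A$.

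First I would exploit the hypothesis to exhibit a particularly good long subsum: since the maximum length-$i$ subsum starting at $p_i$ occupies positions $p_i, \ldots, p_i + i - 1$ and the maximum length-$j$ subsum starting at $p_j = p_i + i$ occupies positions $p_i + i, \ldots, p_i + i + j - 1$, their concatenation is a length-$(i+j)$ subsum starting at $p_i$ with value exactly $s_i + s_j$. In particular, validity of $p_j$ as a length-$j$ starting position guarantees $i + j \leq n$, so length-$(i+j)$ subsums do exist.

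The heart of the argument is to re-split this winner the opposite way. On the one hand, any length-$(i+j)$ subsum decomposes as a length-$i$ prefix plus a length-$j$ suffix, bounded by $s_i$ and $s_j$ respectively, giving $s_{i+j} \leq s_i + s_j$; combined with the concatenation above, this yields $s_{i+j} = s_i + s_j$. On the other hand, the same optimal length-$(i+j)$ subsum at $p_i$ can be decomposed as a length-$j$ prefix at $p_i$ followed by a length-$i$ suffix at $p_i + j$. By maximality these pieces are bounded by $s_j$ and $s_i$; since their total equals $s_i + s_j$, both bounds must be tight. Consequently, $p_i$ is itself a valid starting position of a maximum length-$j$ subsum.

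Since $i \geq 1$ we have $p_i \neq p_i + i = p_j$, so the configuration $P'$ obtained from $P$ by replacing $p_j$ with $p_i$ is distinct from $P$ and still belongs to $\mathcal{P}(A)$; hence $P$ is not unique. The only real bookkeeping obstacle, which I expect to be minor, is checking that the length-$(i+j)$ subsums used above sit within $A$ (i.e.\ that $p_i + j$ and $p_i$ are legitimate starting positions of length-$i$ and length-$j$ subsums, respectively), all of which follow immediately from the hypothesis that $p_j = p_i + i$ is itself a valid start of a length-$j$ subsum.
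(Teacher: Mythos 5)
Your proof is correct. You verify the boundary conditions that matter ($i+j\le n$ and the legitimacy of the splitting points all follow from $p_j=p_i+i$ being a valid start of a length-$j$ window), and the tightness argument is airtight. Your route is genuinely different from the paper's, though it turns on the same underlying interval $[p_i,\,p_j+j-1]$ of length $i+j$. The paper argues by contradiction: it assumes uniqueness, writes the resulting \emph{strict} inequalities, splits the interval into three blocks $R_1,R_2,R_3$ around the overlap of the two windows, and runs a separate case analysis for $i<j$ and $i>j$ to show the strict inequalities are mutually inconsistent. You instead introduce the length-$(i+j)$ subsum as an intermediate object, establish the superadditivity-with-tightness identity $s_{i+j}=s_i+s_j$, and then read off from the reverse decomposition that the length-$j$ window starting at $p_i$ is \emph{also} optimal. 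This buys you two things: the argument is symmetric in $i$ and $j$, so no case split is needed, and the conclusion is constructive and slightly stronger --- every input $A$ realizing $P$ also realizes the explicit configuration $P'$ obtained by replacing $p_j$ with $p_i$, rather than merely the infeasibility of the strict-inequality system. The paper's version, in exchange, stays entirely at the level of the linear inequalities ${\cal Q}(P,\cdot)$ that its algorithm actually feeds to an LP solver, which makes the connection to the implementation more direct. Both proofs are valid; yours would be a clean substitute.
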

\begin{proof}
Let $i,j$ be such that $p_j = p_i + i$. In addition, assume that there is an input $A$ for which $P$ is its unique configuration. This means that
for any $r \neq i$ and $u \neq j$ we have
\begin{equation} \label{eq:basis}
\sum_{s = r}^{r+i-1} a_s < \sum_{s = p_i}^{p_i+i-1} a_s \qquad  \sum_{s = u}^{u+j-1} a_s < \sum_{s = p_j}^{p_j+j-1} a_s
\end{equation}

Let $m = \min \{i, j\}$  and let's split the sequence $a_{p_i} \, a_{p_i+1} \dots a_{p_i+i-1} \, a_{p_j} \, a_{p_j+1} \dots a_{p_j+j-1}$
into three parts, and let $R_1, R_2, R_3$ be the intervals of indices of the elements in these three parts defined as follows:
$$R_1 = \{p_i,  p_i+1,  \dots p_i+m-1\},  R_2 = \{p_i+m  , p_i+m+1,  \dots p_j+j-1-m\}, 
R_3 = \{p_j+j-m,  \dots p_j+j-1\}.$$

\noindent
{\em Case 1.}
If $i < j$  we have  $\sum_{s = p_i}^{p_i+j-1} a_s = \sum_{s \in R_1} a_s + \sum_{s \in R_2} a_s$ and 
 $\sum_{s = p_j}^{p_j+j-1} a_s = \sum_{s \in R_2} a_s + \sum_{s \in R_3} a_s.$ 
 Then applying the second inequality in (\ref{eq:basis}) we get
 $$\sum_{s \in R_1} a_s + \sum_{s \in R_2} a_s = \sum_{s=p_i}^{p_i+j-1} a_s < \sum_{s = p_j}^{p_j+j-1} a_s = \sum_{s \in R_2} a_s + \sum_{s \in R_3} a_s,$$
 which implies that 
 $\sum_{s = p_i}^{p_i+i-1} a_s = \sum_{s \in R_1} a_s < \sum_{s \in R_3} a_s = \sum_{s = p_j+j-i}^{p_j+j-1} a_s$ which is a contradiction to the 
 first inequality in (\ref{eq:basis}).

\noindent
{\em Case 2.}
If $i > j$  we have  $\sum_{s = p_i}^{p_i+i-1} a_s = \sum_{s \in R_1} a_s + \sum_{s \in R_2} a_s$ and 
 $\sum_{s = p_i+j}^{p_j+j-1} a_s = \sum_{s \in R_2} a_s + \sum_{s \in R_3} a_s.$ 
 Then applying the first inequality in (\ref{eq:basis}) we get
 $$\sum_{s \in R_1} a_s + \sum_{s \in R_2} a_s = \sum_{s= p_i}^{p_i+i-1} > \sum_{s = p_i+j}^{p_j+j-1} a_s = \sum_{s \in R_2} a_s + \sum_{s \in R_3} a_s,$$
 which implies that 
 $\sum_{s = p_i}^{p_i+j-1} a_s = \sum_{s \in R_1} a_s > \sum_{s \in R_3} a_s = \sum_{s \in p_j}^{p_j+j-1} a_s$ which is a contradiction to the 
 second inequality in (\ref{eq:basis}). 

\smallskip
Since in either case we have a contradiction, it follows that if $i,j$ are adjacent maxima than $P$ cannot be a unique configuration.
\end{proof}

\smallskip

For instance, consider the configuration $P=(5,1,3,4,1)$.
By taking $i=2$ and $j=3$, we have  $p_i+i =p_j$. Thus, we can conclude that  $P$ is not unique, as can be easily verified.

Unfortunately, this non-adjacency property does not completely characterize  the set of unique configurations  
because there exist configurations with no adjacent maximums that are not unique. 
For example, the configuration $P = (2, 4, 2, 1, 2, 1)$ has no adjacent maximums and is not unique.
In fact, if $A=(a_1,\ldots,a_6)$ is an input sequence for which $P$ is its unique configuration then we must
 have simultaneously: (i) $a_2  > a_4$ because of $p_1=2$; (ii) $a_4 + a_5  > a_1 + a_2$ because of  $p_2=4$; and (iii)
$a_1 + a_2  + a_3  + a_4 > a_2 + a_3  + a_4  + a_5$ because of $p_4=1$. 
However, this is impossible, since by summing the first two inequality and adding $a_3$ on both sides, 
we obtain a contradiction to the third inequality. Nonetheless, we can use the above condition to speed up our algorithms.

In order to exactly count
the number of unique configurations, we
explore the tree of   all permutations of  $\{1,\ldots,n\}$,
pruning the nodes that do not lead to an unique configuration.
In fact, assume that the
values of the positions $1,2,\ldots,i-1$ of
the  permutation $P$, under construction, are already  fixed. 
Then, for each $j$ that does not appear in the first $i-1$ positions, the procedure {\textsc{IsFeasible}$(P, i,j)$}, explained below, is called
to  verify whether it is possible to extend the partial permutation $P$ by
setting the value of position $i$ to $j$. 
If this is the case, the algorithm
set $p_i=j$ and it proceeds constructing the permutations.
Whenever we complete a permutation we increase the number of unique configurations.


The procedure {\textsc{Is-Feasible}$(P, i,j)$}
first  verifies if the subsequence of $A$ starting at position  $j$ is adjacent to some 
maximum subsequence that has already been fixed.
If this test is positive it rules out $j$ as a value for $p_i$ due to Proposition~\ref{prop:adj1}.
Otherwise, the procedure  verifies whether the set of inequalities  $Q(P,1)\cup \cdots \cup Q(P,i)$
is feasible and it returns TRUE or FALSE, accordingly.

\begin{algorithm}
{\small
\caption{\textsc{IsFeasible}($P, i,j$)}
\label{alg:isfeasible}
\begin{algorithmic}[1]
\IF{the subsequence of length $i$ starting at position $j$ is adajcent to the subsequence of length $k$ starting at $p_k$ for some $k<i$} 
  \STATE{\textbf{return} FALSE}
\ELSE
\STATE{$p_i =j$  \,\,\,\,\, $\%$ this is only to verify if this extension if feasible.}
\STATE{{\bf if} the set of inequalities  ${\cal Q}(P,1) \cup  \cdots \cup {\cal Q}(P,i)$is feasible}
\STATE{~~{\bf then} {\bf return} TRUE}
\STATE{{\bf else} {\bf return} FALSE}
\ENDIF
\end{algorithmic}
}
\end{algorithm}


Table~\ref{tab:deterministicresults} presents the results obtained by the deterministic  approach.
We were able to determine the number of unique configurations up to $n=14$.
The results suggest a super exponential growth. Indeed, notice the growth of  the ratio between the number of unique configurations
and $(n/2)!$.
We selected the function $(n/2)!$ because it is a simple function 
whose  logarithm is $\theta (n \log n)$.


\begin{table}
\centering
\caption{The  number of unique configurations for $n=1,\ldots,14$ compared to the value
\label{tab:deterministicresults}
$n/2!$.}
\begin{tabular}{ r c r c }
\hline
$n$ &  U(n) =  N\textsuperscript{\underline{o}} Unique Config. & $\frac{n}{2}!$ & Ratio $\frac{U(n)}{(n/2)!}$\\ 
\hline
$1$ & \numprint{1} & \numprint{0.8} & \numprint{1.25}x\\
$2$ & \numprint{2} & \numprint{1.0} & \numprint{2.00}x\\
$3$ & \numprint{4} & \numprint{1.3} & \numprint{3.07}x\\
$4$ & \numprint{12} & \numprint{2.0} & \numprint{6.00}x\\
$5$ & \numprint{36} & \numprint{3.3} & \numprint{10.90}x\\
$6$ & \numprint{148} & \numprint{6.0} & \numprint{24.66}x\\
$7$ & \numprint{586} & \numprint{11.6} & \numprint{50.51}x\\
$8$ & \numprint{2790} & \numprint{24.0} & \numprint{116.25}x\\
$9$ & \numprint{13338} & \numprint{52.3} & \numprint{255.02}x\\
$10$ & \numprint{71562} & \numprint{120.0} & \numprint{596.35}x\\
$11$ & \numprint{378024} & \numprint{287.9} & \numprint{1313.03}x\\
$12$ & \numprint{2222536} & \numprint{720.0} & \numprint{3086.85}x\\
$13$ & \numprint{12770406} & \numprint{1871.3} & \numprint{6824.34}x\\
$14$ & \numprint{78968306} & \numprint{5040.0} & \numprint{15668.31}x\\
\hline
\end{tabular}
\end{table}

All the executions required $27$ hours of CPU time under the following hardware and software specifications: Main Hardware Platform: Intel\textregistered~Core\texttrademark~i7~3960X, 3.30GHz CPU, 32GB RAM, 64-bit; OS: Windows 7 Professional x64; Compiler: Microsoft\textregistered Visual C\# 2010 Compiler version 4.0.30319.1; 
Solver: Gurobi Optimizer.


In order to extend our analysis to larger instances
we then employed a probabilistic approach.

\subsection{A Probabilistic Approach}

The first idea for estimating
the number of unique configurations is to 
sample a large number $M$ of configurations and  test whether each of them is unique or not.
The number of unique configurations found over $M$ is an unbiased estimator for the number of unique configurations.
With this approach we managed to
obtain strong evidence of the super linear lower bound for $n$ up to 28.
To extend our range we followed a different approach.


In the deterministic case, we explore the configuration space via a depth first search over the back-tracking tree of all possible configurations. In our probabilistic approach,  presented in 
Algorithm \ref{alg:bflv},
 we randomly traverse a  path in this  tree
that corresponds to a unique configuration.
Assume that we have already fixed the values for the positions $1,2,\ldots,i-1$ of
 the configuration $P$ that is under construction. Then, in order to set the value of $p_i$,
we construct a list $S$ of  all  values $j \in \{1,\ldots,n-i+1\}$ such that {\textsc{IsFeasible}($P, i,j$)} returns TRUE.
Let $b_i=|S|$ be the  \emph{branching factor} of our path at level $i$. Then, we  randomly  choose one of the  values in $S$ for  $p_i$ and continue  to set the values of $p_j$ for $j>i$; if the method observes the  branching factors $b_1, b_2, \ldots, b_n$, in a root to leaf path, then it outputs $\displaystyle{X = \prod_{i = 1}^{n} b_i}$
as a guess for the number of unique configurations.

\begin{algorithm}
{\small
\caption{\textsc{Branching-Product}($n$)}
\label{alg:bflv}
\begin{algorithmic}[1]
\STATE{$X \gets 1$ }
\FOR{$i\gets 1$ \textbf{to} $n$}
	\STATE{$S \gets \emptyset$}
	\FOR{$j\gets 1$ \textbf{to} $n - i + 1$}
	  \IF{\textsc{IsFeasible}($P, i,j$)} \label{li:randomprobing:isfeasible}  
	    \STATE{Add $j$ to the list $S$ of possible branchings} 
	  \ENDIF
	\ENDFOR  
	  \IF{$S$ is empty}
			\STATE{\textbf{return} $0$}
	  	\ELSE 
	      \STATE{$X \gets X \times |S|$}
	     \STATE{$p_i \gets $ value randomly selected from list $S$	\label{li:randomprobing:uniform}}  
		\ENDIF
\ENDFOR
\STATE{\textbf{return} $X$}
\end{algorithmic}
}
\end{algorithm}

The value $X$ can be used to estimate the number of unique configurations because $X$ is a sample of a random variable $\mathbf{X}$ whose expected value $\E[\mathbf{X}]$ is equal to the number of the unique configurations. In fact, let $\ell$ be a leaf located at depth $n$ of the backtracking tree, that is, $\ell$ corresponds to a unique configuration. The probability of reaching $\ell$ in our random walk  is $1/B(\ell)$, where $B(\ell)$ is the product of the branching factors in the path from the root of the tree to $\ell$. In addition, if $\ell$ is reached, the method outputs $B(\ell)$. Let $L$ be the set of leaves located at level $n$ in the backtracking tree. Thus we have that
\[\E[\mathbf{X}]=\sum_{\ell \in L} \frac{1}{B(\ell)}\times B(\ell) = |L|.\]

After coming up with this approach, we found out that it had already been proposed to study heuristics for backtracking algorithms~\cite{Kullman:2009:FBH,Knuth:1975:EFBP}.

We do not use directly the observed value $X$ to estimate the number of unique configurations; instead, we assume as a null hypothesis that $\E[\mathbf{X}]$, the number of unique configurations, is smaller than or equal to $(\frac{n}{2}!)$ and use the sampled value of $\mathbf{X}$ to reject this hypothesis with some level of confidence.

Let $c \ge 1$. Under the hypothesis that  $\E[\mathbf{X}] \leq (\frac{n}{2})!$, using 
Markov's inequality, it follows that:
$$ \Prob \left [\mathbf{X} \geq c\frac{n}{2}!\right ] \leq \Prob[X \geq c\E[\mathbf{X}]] \leq \frac{1}{c}$$
%
which implies that $\displaystyle{\Prob \left [\mathbf{X} < c\frac{n}{2}! \right ] \geq 1-\frac{1}{c}.}$

Therefore, if we sample $\mathbf{X}$ and find a value larger than $c\frac{n}{2}!$,
 we can reject the hypothesis and conclude that the number of unique configurations  is larger than $ \frac{n}{2}!$ with confidence of $1-\frac{1}{c}$. 


We can extend this approach by taking the maximum of $k$ samples. Let $X_1, \ldots, X_k$ be the values for $k$ samples of the random variable $\mathbf{X}$. 
Then, with the hypothesis  $\E[\mathbf{X}] \leq \frac{n}{2}!$ and 
using the above inequality we have
\begin{equation} 
\Prob \left [ \phantom{\frac{1}{2}} \max\{X_1,  \ldots, X_k\} \right. 
<  \left. c\frac{n}{2}! \right ] = 
\Prob \left [\bigwedge_{i=1}^k \left(X_i < c\frac{n}{2}! \right) \right ]  
=  \prod_{i=1}^k \Prob \left[X_i < c\frac{n}{2}! \right ] \geq  \left(1 - \frac{1}{c} \right)^k.
\end{equation}



Thus,  if one of the values $X_1,X_2,\ldots,X_k$ is grater than or equal to $c\frac{n}{2}!$, then we can reject the hypothesis and
conclude that $\E[\mathbf{X}] \geq \frac{n}{2}!$ with confidence $\left( 1 - 1/c \right)^k.$




In our experiments we sampled $1000$ values of $\mathbf{X}$ for different values of $n$.
The choice of $1000$ for $k$ was to guarantee a reasonable CPU time.
Let $max(n,1000)$ be the maximum value found in the $\numprint{1000}$ samples and let 
$c_n = \lfloor max(n,1000) / \frac{n}{2}! \rfloor$.
Table~\ref{tab:factorUnbiasedEstimator} shows the probability of $\Prob[\max\{X_1, \ldots, X_{1000}\} < c_n\frac{n}{2}!]$ assuming that $\E[\mathbf{X}]\leq \frac{n}{2}!$ for configurations up to size $n=100$. 
 This probability also expresses our confidence to reject the hypothesis because in fact we've found a value equal to $c_n\frac{n}{2}!$. 
Based on these results we state the following conjecture.

\begin{conjecture}
The running time of \maxsums\ is $\Omega(n \log n)$ in the decision tree model.
\end{conjecture}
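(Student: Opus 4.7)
By Theorem~\ref{theo:lowerbound}, it suffices to exhibit a family $\mathbb{P}$ of pairwise distinct unique configurations with $|\mathbb{P}| \geq (n/2)!$, since then $K(n) \geq \log_{3} |\mathbb{P}| = \Omega(n\log n)$. The natural plan is to generalize the construction of Theorem~\ref{prop:exponentialfeasibles}, which encodes a single bit per position, into a construction that encodes a full permutation. The empirical data in Table~\ref{tab:deterministicresults} are consistent with such a construction existing, since the ratio $U(n)/(n/2)!$ is not only positive but growing.

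The first step is to design an input template $A^{\pi}$ indexed by a permutation $\pi$ of a set $T$ of size roughly $n/2$. The template in Theorem~\ref{prop:exponentialfeasibles} places a single dominant spike $a_3=4n$ that forces every maximum window of length $\ell\leq n-2$ to contain position $3$, reducing the choice of starting position to $\{2,3\}$. To obtain factorially many configurations I would place several spikes of geometrically decreasing heights at carefully chosen, well-separated positions, arranged so that the maximum window of length $\ell$ is forced to contain a prescribed subset of spikes depending on $\ell$; the permutation $\pi$ would then dictate, for each $\ell$, which of several still-available anchoring spikes is actually chosen, by tuning small "tie-breaker" values placed between the spikes.

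The second step is verification. For every $\pi$ I must show (i) the system of strict inequalities $\bigcup_{\ell=1}^{n-1}\mathcal{Q}(P^{\pi},\ell)$ is feasible, so that $P^{\pi}$ is indeed the unique output of $A^{\pi}$, and (ii) the map $\pi\mapsto P^{\pi}$ is injective. Injectivity should follow essentially for free if the coordinates of $P^{\pi}$ can be read off from $\pi$ directly. Uniqueness is harder and would be handled by induction on $\ell$: each successive window length either strictly prefers one anchoring spike over another by an amount larger than the total magnitude of the tie-breakers yet to act, so that the strict inequalities propagate.

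The main obstacle will be the Non-Adjacency Property of Proposition~\ref{prop:adj1}. Introducing many spikes multiplies the risk that two optimal windows of different lengths land in adjacent blocks, and any such adjacency kills uniqueness instantly. The construction must therefore enforce a spacing between spikes that grows with the maximum window length involved, and this spacing budget competes with the total length $n$ of the sequence. Balancing these two demands---enough spikes and enough tie-breaker room to encode $(n/2)!$ choices, yet enough spacing to avoid adjacency---is the genuine combinatorial difficulty, and it is presumably also the reason why the empirical ratio in Table~\ref{tab:deterministicresults}, although large, grows only moderately in $n$ for small instances. A possible fallback, if a direct construction proves too rigid, is a recursive or self-similar scheme in which a sequence of length $n$ is glued from two sequences of length $n/2$, each contributing roughly $((n/4)!)^{2}$ unique configurations, and a combinatorial argument shows these combine into $\Omega((n/2)!)$ distinct unique configurations overall.
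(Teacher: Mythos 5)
The statement you are addressing is a \emph{conjecture}: the paper does not prove it, and offers only empirical evidence (an exact count of unique configurations up to $n=14$ and a sampling-based hypothesis test up to $n=100$). Your proposal correctly identifies the route the paper itself has in mind --- by Theorem~\ref{theo:lowerbound} it suffices to exhibit $(n/2)!$ distinct unique configurations --- but it does not close the gap: it is a research plan, not a proof. The entire difficulty of the conjecture is concentrated in your ``first step'' (designing the template $A^{\pi}$) and ``second step'' (verifying feasibility of $\bigcup_{\ell}\mathcal{Q}(P^{\pi},\ell)$ for every $\pi$), and neither is carried out. The paper's own Theorem~\ref{prop:exponentialfeasibles} shows what \emph{can} currently be proved by the spike method: a single dominant spike yields one binary choice per window length, hence only $\Omega(2^n)$ unique configurations, which through Theorem~\ref{theo:lowerbound} gives merely a linear bound.

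The concrete obstacle your sketch underestimates is the \emph{coupling} between window lengths. All $n$ optimal starting positions are determined by the same input $A$, so the choices for different lengths $\ell$ are not independently tunable: fixing the winner for length $\ell$ imposes linear constraints on $A$ that restrict the achievable winners for every other length. Avoiding the Non-Adjacency Property of Proposition~\ref{prop:adj1} is necessary but far from sufficient --- the paper exhibits the configuration $P=(2,4,2,1,2,1)$, which has no adjacent maxima yet is infeasible because three of its defining inequalities sum to a contradiction. Any multi-spike construction must rule out all such hidden linear dependencies among the $\Theta(n^2)$ strict inequalities simultaneously, and no inductive ``tie-breaker propagation'' argument of the kind you describe is known to do this while still leaving factorially many degrees of freedom. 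Your fallback gluing scheme faces the same issue: windows that straddle the two halves generate cross constraints that are not controlled by the two sub-instances. Until a family $A^{\pi}$ is written down and its feasibility proved, the statement remains a conjecture.
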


\begin{table}
\centering 
\caption{ $\Prob \left [\max\{X_1, \ldots, X_k\} < c_n\frac{n}{2}! \right ]$ for $k = \numprint{1000}$}
\label{tab:factorUnbiasedEstimator}
\npthousandthpartsep{}
\begin{tabular}{ l r  r }
\hline
$n$ & $ c_n $ & $\Prob[\max_{1}^{k}\{X_i\} < c_n \frac{n}{2}!]$\\ 
\hline
$10$ & \numprint{6048} & \numprint{99.98346560846560}$\%$\\
$11$ & \numprint{23760} & \numprint{99.99579124579120}$\%$\\
$12$ & \numprint{38880} & \numprint{99.99742798353910}$\%$\\
$13$ & \numprint{439296} &  \numprint{99.99977236305360}$\%$\\
$14$ & \numprint{558835} &  \numprint{99.99982105636870}$\%$\\
$20$ & \numprint{372252672} & \numprint{99.99999973136530}$\%$\\
$30$ & \numprint{102827922078}  & \numprint{99.99999999902750}$\%$\\
$40$ & \numprint{4680410711674}  & \numprint{99.99999999997860}$\%$\\
$50$ & \numprint{69590788615385}   & \numprint{99.99999999999860}$\%$\\
$60$ & \numprint{562841769233371}  & \numprint{99.99999999999980}$\%$\\
$70$ & \numprint{136904322455757}   & \numprint{99.99999999999930}$\%$\\
$80$ & \numprint{87399891508924}   & \numprint{99.99999999999890}$\%$\\
$90$ & \numprint{73279283017}   & \numprint{99.99999999863540}$\%$\\
$100$ & \numprint{204252401}  & \numprint{99.99999951040970}$\%$\\
\hline
\end{tabular}
\end{table}

%



\setstretch{1.0}


\end{document}